\theoremstyle{plain}
\newtheorem{prop}{Proposition}
\theoremstyle{definition}
\newtheorem{ex}{Example}
\newtheorem*{rem*}{Remark}
\renewcommand{\rm}[1]{\mathrm{#1}}
\renewcommand{\bf}[1]{\mathbf{#1}}
\newcommand{\bb}[1]{\mathbb{#1}}
\newcommand{\R}{\bb{R}}
\renewcommand{\o}{\rm{o}_t}
\renewcommand{\c}{\rm{c}_t}
\newcommand{\h}{\rm{h}_t}
\begin{document}
\title{Efficient Industrial Refrigeration Scheduling with Peak Pricing}

\author{Rohit Konda, Jordan Prescott, Vikas Chandan, Jesse Crossno, \\ Blake Pollard, Dan Walsh, Rick Bohonek, and Jason R. Marden \thanks{R. Konda (\texttt{rkonda@ucsb.edu}) and J. R. Marden (\texttt{jrmarden@ece.ucsb.edu})are with the Department of Electrical and Computer Engineering at the University of California, Santa Barbara, CA. Vikas Chandan \texttt{vikas@crossnokaye.com}, Jesse Crossno \texttt{crossno@crossnokaye.com}, Blake Pollard \texttt{blake@crossnokaye.com}, and Dan Walsh \texttt{dan@crossnokaye.com} are with CrossnoKaye\textregistered . Rick Bohonek \texttt{rbohonek@butterball.com} is with Butterball LLC\textregistered. This work is supported by funding from CrossnoKaye.}}

\maketitle
\thispagestyle{empty}

\begin{abstract}
The widespread use of industrial refrigeration systems across various sectors contribute significantly to global energy consumption, highlighting substantial opportunities for energy conservation through intelligent control design. As such, this work focuses on control algorithm design in industrial refrigeration that minimize operational costs and provide efficient heat extraction. By adopting tools from inventory control, we characterize the structure of these optimal control policies, exploring the impact of different energy cost-rate structures such as time-of-use (TOU) pricing and peak pricing. While classical threshold policies are optimal under TOU costs, introducing peak pricing challenges their optimality, emphasizing the need for carefully designed control strategies in the presence of significant peak costs. We provide theoretical findings and simulation studies on this phenomenon, offering insights for more efficient industrial refrigeration management.
\end{abstract}

\section{Introduction}
\label{sec:int}

Industrial refrigeration systems are widely utilized across diverse sectors, not limited to plastics manufacturing, chemical processing, food storage, and electronics production \cite{dincer2017refrigeration, stoecker1998industrial, fabrega2010exergetic, tassou2010review}. Collectively, industrial refrigeration contributes to approximately $8.4\%$ of total energy consumption in the United States \cite{eiareport}. Consequently, there exist significant opportunities for energy conservation within industrial refrigeration. These opportunities extend beyond hardware upgrades to include enhancements in the control algorithms implemented in these systems. Improvement in the algorithm design can potentially be more appealing, as they can yield substantial energy savings with minimal capital investment required for system retrofitting.

The intention behind improvements in algorithm design is to strategically adjust the control strategies of the components within the industrial refrigeration process to achieve the necessary heat extraction while minimizing operation costs. These costs can be measured through total power, electric costs, carbon emissions, or other relevant metrics. For example, the set points or steady state configurations of the components can be optimized to raise the energy efficiency \cite{zhao2013model, larsen2003control, larsen2006model, manske2000performance}. Furthermore, model predictive control or trajectory optimization can be utilized to dynamically optimize the energy efficiency of the refrigeration cycle over a time horizon \cite{hovgaard2013nonconvex, yin2018model, shafiei2014model}. Another approach that has garnered significant attention is \emph{thermal load shifting}, where refrigeration loads are dynamically managed to leverage variable energy cost-rate structures \cite{sun2013peak, yao2021state, pattison2016optimal, pattison2017moving, vishwanath2019iot}. While these results are quite encouraging in conventional cost structures, there remains a need to characterize the qualitative behavior of optimal control policies when considering more varied cost structures.

There exist many different energy cost-rate structures that may depend on various factors. \emph{Fixed rate} pricing, the most traditional rate structure, involves charging a flat rate per unit of energy consumed, irrespective of external conditions. In contrast, \emph{time-of-use} (TOU) pricing varies the per-unit rate based on the time or season. Typically, this results in higher energy costs during peak demand periods and lower energy costs during off-peak periods. In this setup, there may be significant economic benefit for the operator to shift their energy usage to off-peak hours. Deciding on how to shift is exactly the focus of the thermal load shifting literature. To further regulate energy usage, \emph{peak pricing} can also be introduced, where the cost is dependent on the maximum power usage of the system over a time period and spikes in energy consumption are highly disincentived. In many industrial refrigeration systems, peak pricing may comprise of a large portion of the energy costs. In this work, we focus on when both peak pricing and TOU costs are present and quantify the impact on optimizing the scheduling of refrigeration loads. 

\begin{figure}[h!]
    \centering
    \includegraphics[width=230pt]{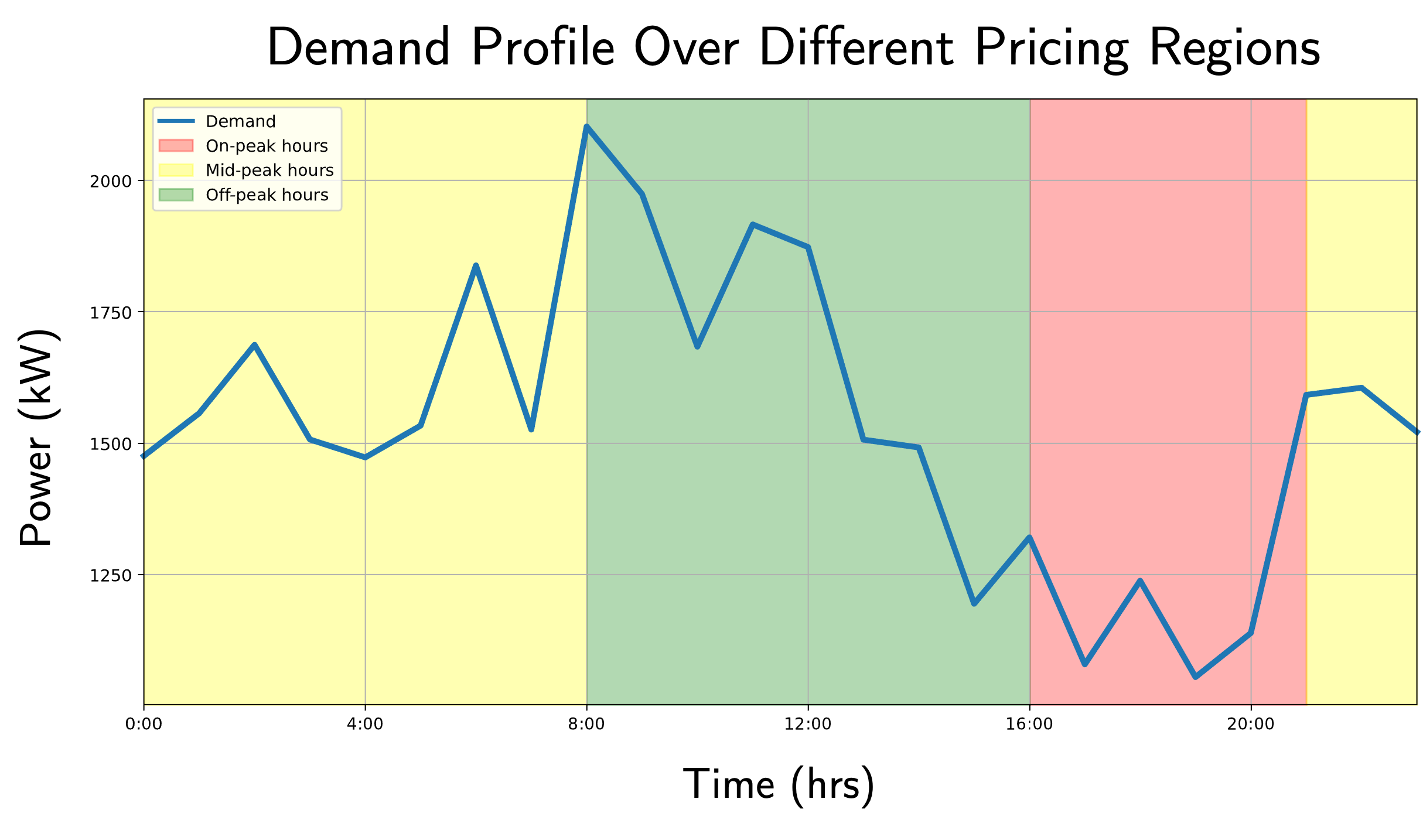}
    \caption{We depict a prototypical power consumption profile over different pricing regions.}
    \label{fig:peak}
\end{figure}

\noindent \textbf{Case Study.} We introduce an example of a rate structure that a refrigeration facility may be charged with for its energy consumption.\footnote{This data was obtained through a direct collaboration with CrossnoKaye (see \url{crossnokaye.com}), a company that focuses in integrating control systems for industrial refrigeration systems within the cold food and beverage sector.} The time-of-use costs are dependent on time of day, where the cost rates vary between three time windows: on-peak hours (4:00PM - 9:00PM), mid-peak hours (9:00PM - 8:00AM), and off-peak hours (8:00AM - 4:00PM). Typically, energy-costs are higher in on-peak hours and lower in off-peak hours. Additionally, the facility can incur additional peak costs: a peak charge over the maximum energy consumption over a month and a peak charge over the month for each of the specific time windows (on, mid, off). We display a possible demand profile under this cost structure in Figure \ref{fig:peak}.
\vspace{4pt}

Designing control policies using optimization methods that account for peak pricing, though not as common as TOU pricing, has been studied previously \cite{risbeck2019economic, mo2021optimal, oldewurtel2010reducing}. However, the primary objective of this work is to theoretically evaluate how peak pricing influences the qualitative structure of the optimal control policies. We approach this through uniquely adopting the perspective of \emph{inventory control}. Inventory control \cite{axsater2015inventory, liu2012decision} is a classical branch of multi-stage decision problems that explores purchasing policies of inventory to ensure optimal warehouse stock levels. When energy rates solely consist of time-of-use (TOU) costs, classical findings from inventory control suggest that the optimal control algorithms should adopt a threshold approach: when the facility's temperature exceeds a specified threshold, the refrigeration load is increased to return the facility to a desired buffer temperature. Threshold policies are commonly implemented in practice for industrial refrigeration; we verify its optimality under TOU costs in Proposition \ref{prop:nopeak}. However, the introduction of peak pricing disrupts the optimality of threshold policies. In fact, in Proposition \ref{prop:main}, we characterize the structure of the optimal control policy under peak pricing; we see that the optimality of simple threshold policies is lost in these settings. We also verify our findings through simulation in Section \ref{sec:simulations}. Our work suggests that designing control policies should be done carefully if significant peak costs are present.

\section{Peak Pricing Model}
\label{sec:model}

In this paper, we aim to devise scheduling strategies for refrigeration systems that strike a balance between minimizing overall costs and adequately meeting the necessary cooling demands of the facility. Additionally, we investigate the impact of \emph{peak pricing}, a common energy pricing mechanism involving charging based on the maximum energy consumption over a specific period. While such pricing mechanisms are commonly employed, strategies that directly account for these costs are not as well explored. Therefore we leverage concepts from \emph{inventory control} to analyze how peak pricing influences the optimal scheduling policies. 

To focus on this, we simplify the operational process of the industrial refrigeration, and solely focus on the relationship between the total cooling and energy expenditures of the refrigeration process. The primary system state is the facility temperature, denoted as $x_t \in \R$ for a given time $t \in \{1, \dots, T\}$. Here, $T$ represents the length of the horizon under consideration. Using a first-order model of specific heat\footnote{According to the specific heat equation, $x_{t+1} - x_t = C_f Q_{\rm{net}}$, where $Q_{\rm{net}}$ is the net heat transfer and $C_f$ is the heat capacity. For simplicity of notation, we assume that $C_f = 1$ for the manuscript.}, we can succinctly describe the dynamics of the facility's temperature as follows:
\begin{equation}
    x_{t+1} = x_t - u_t + q_t,
\end{equation}
where $u_t \geq 0$ represents the heat removed from the facility via the refrigeration system and $q_t$ represents the heat influx from the surrounding environment. We assume that over the horizon, the incoming heat $q_t \geq 0$ is a non-negative random variable that is drawn from a known distribution $\bf{Q}_t$ that may be time varying. As an illustration, we present a scatter plot of possible heat demands over a day\footnote{Over June 2023, power and coefficient of power (COP) estimates were collected from a refrigeration site of Butterball LLC \textregistered \ to estimate the incoming refrigeration loads. } in Figure \ref{fig:Qt}.

\begin{figure}[h!]
    \centering
    \includegraphics[width=240pt]{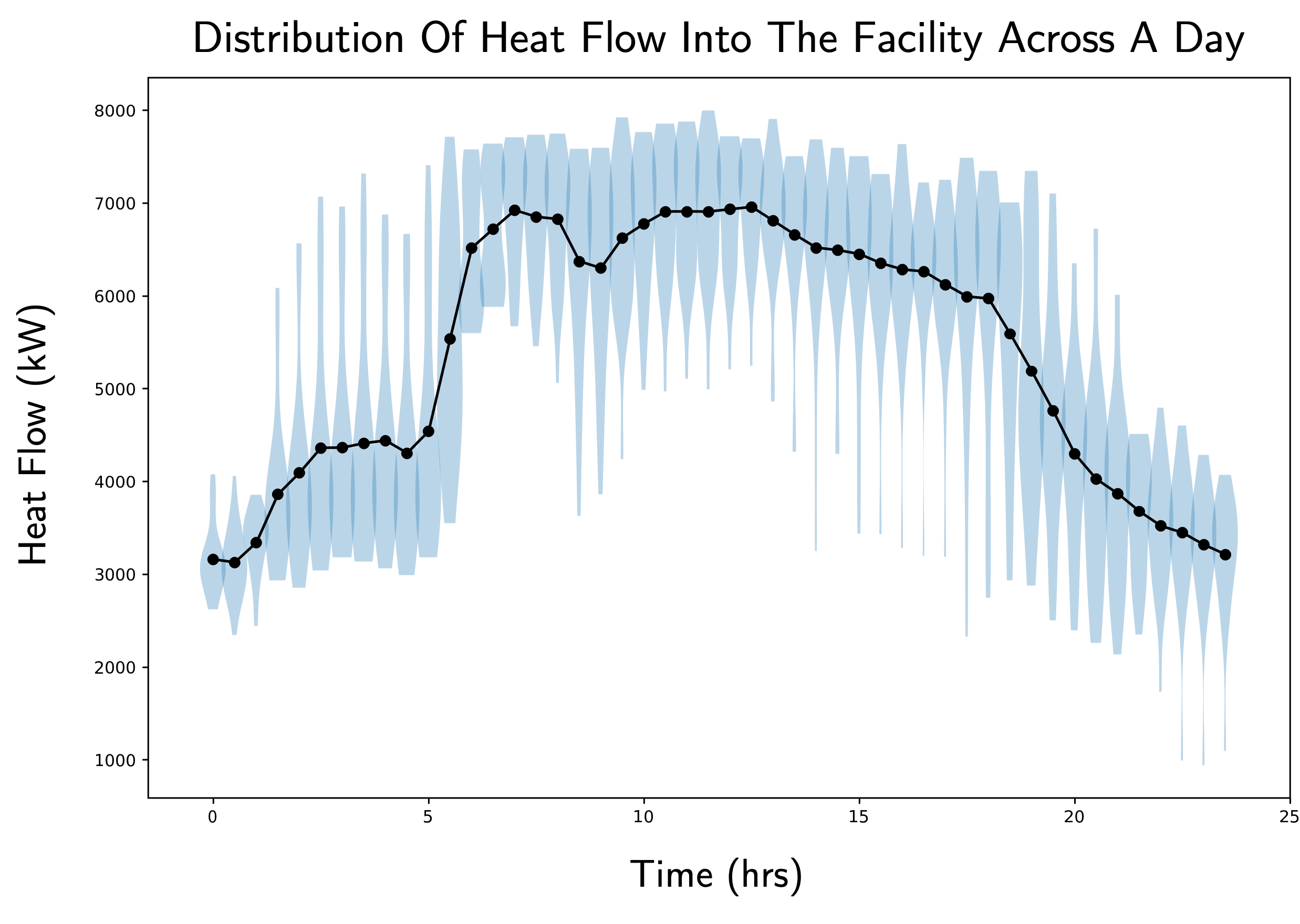}
    \caption{We present distributions and respective averages of $\bf{Q}_t$ over the horizon of a day for a particular refrigeration facility. }
    \label{fig:Qt}
\end{figure}

\begin{figure}[t!]
    \centering
    \includegraphics[width=200pt]{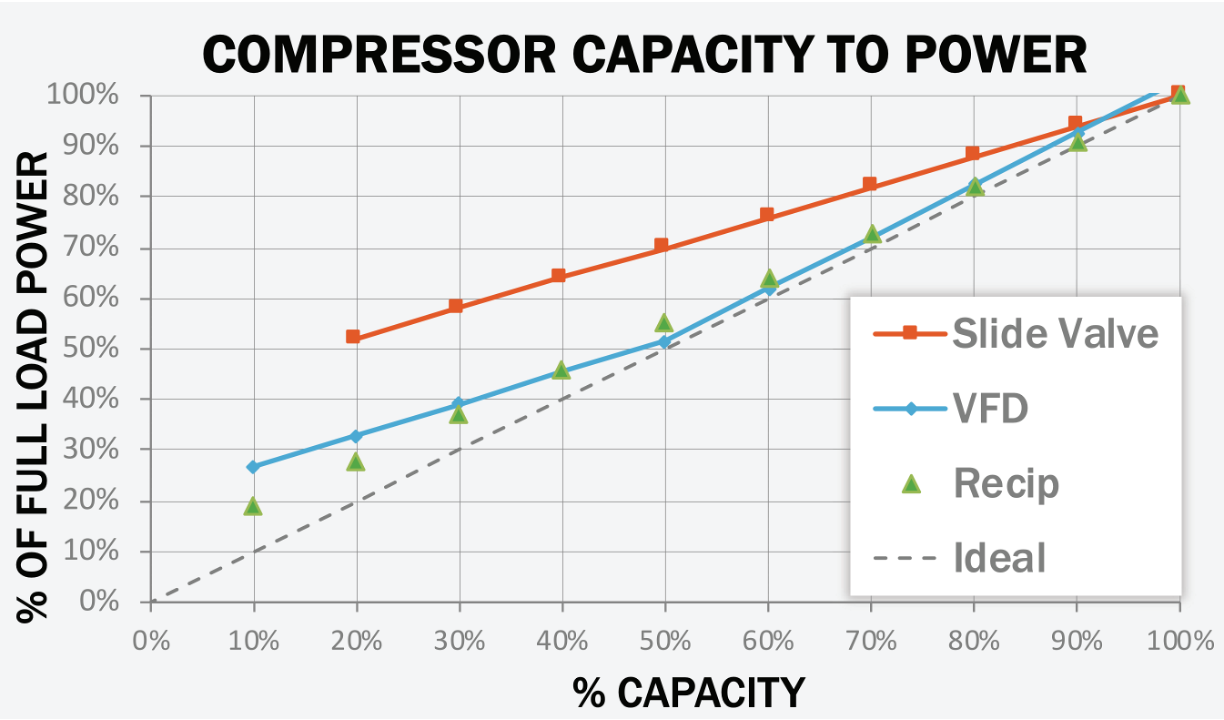}
    \caption{We display the power-heat curves of different compressor types as taken from \cite{factandfig}. We see that for compressors without variable frequency drives (VFDs), the power draw and the respective thermal capacity share an affine relationship.}
    \label{fig:ordercost}
\end{figure}

The control task of the refrigeration system is to generate a sequence of refrigeration loads $\bf{u} \equiv u_1, \dots, u_T$ that minimizes the overall system cost, dependent on power consumption, while providing the necessary heat extraction. For a food storage facility, the required refrigeration comes in the form of a temperature constraint $x_t \leq 0$, where we would like to maintain temperatures to be below freezing. We capture both the energy costs and temperature constraint violations in the following stage cost,
\begin{equation}
\label{eq:stage}
    \c(x_t, u_t) = \o(u_t) + \bb{E}_{q_t} \left[ \h(x_{t+1}) \right].
\end{equation}
Here, $\o: \R_{\geq 0} \to \R_{\geq 0}$ describes the cost associated with running the refrigeration at a load of $u_t$ and $\h: \R \to \R_{\geq 0}$ represents the penalty costs for set-point deviations from the desired temperature $x_t = 0$. For each $t$, we assume that $\h$ is continuous, convex, and has a minimum at $0$, i.e. our desired temperature set point. If $x > 0$ is positive, $\h(x)$ represents the penalty cost for temperature constraint violations. If $x < 0$ is negative, then $\h(x)$ represents the cost of excessive cooling which may lead to system inefficiencies. An example of a reasonable penalty function is
\begin{equation}
\label{eq:viol}
    \h(x) = \begin{cases}
        b_t \cdot x^2 & \text{ if } x \geq 0 \\
        d_t \cdot x^2 & \text{ if } x < 0.
    \end{cases}
\end{equation}
with $b_t \gg d_t$ for each $t$. Since the energy costs are proportional to the total power consumption of the refrigeration system, we assume an affine structure for the energy costs with respect to a given refrigeration load, written as
\begin{equation}
\label{eq:ordering}
    \o(u_t) = \begin{cases}
        K + a_t \cdot u_t & \text{ if } u_t > 0 \\
        0 & \text{ if } u_t = 0,
    \end{cases}
\end{equation}
where $K$ represents the setup cost of having the refrigeration system in operation and $a_t$ determines the per-unit cost for refrigeration capacity at time $t$. The per-unit cost $a_t$ may depend heavily on $t$, representing a potential time-of-use cost structure. This cost model is reflected in the power-heat curves of the compressors, which represent the majority ($\sim 90 \%$) of the energy consumption of the refrigeration process, as shown in Figure \ref{fig:ordercost}. 

The cost for peak pricing is reflected in the maximum refrigeration load over the horizon. More formally, the peak price can be written as $P \cdot \max \{u_t \text{ for } t \leq T\}$, where $P \geq 0$ is the scaling factor associated with the peak costs. To represent peak cost as a terminal cost, we can introduce an auxiliary state variable $y \in \R_{\geq 0}$ with dynamics 
\begin{equation}
    y_{t+1} = \max \{y_t, u_t \}.
\end{equation}
We note that while the structure of problem without peak pricing is classical in the inventory control literature (for this, see \cite{porteus1990stochastic, lu2014inventory, sobel1970making} for similar results with regards to convex, piecewise affine $\o$), the addition of peak pricing is a novel consideration. While not as studied, peak pricing is extremely significant to the cost structures for industrial processes.

Consolidating the stage and terminal costs, the total cost of refrigeration process under the sequence of loads $\bf{u}$ with a given peak $y$ is 
\begin{equation}
\label{eq:totalcost}
    J(x, y, \bf{u}) = \bb{E} \left[\sum_{t=1}^{T} \c(x_t, u_t) \right] + P \max \Big\{ y, \{u\}_{t \leq T} \Big\},
\end{equation}
where $x_t$ follows from the respective state transition probabilities from the initial state $x_1 = x$ and the expectation is taken over the possible incoming heat $q_t \sim \bf{Q}_t$. The optimal total cost can be written recursively via the Bellman equation,
\begin{align}
    &V_{t}(x, y) = \min_{u \geq 0} \Big\{ c_t(x, u) + \bb{E}_{q_t} \Big[ V_{t+1}(x^+ + q_t, y^+)\Big] \Big\}, \label{eq:VT}\\
\label{eq:Vterm}
    &V_{T+1}(x, y) = P \cdot y.
\end{align}
$V_{T+1}$ represents the terminal cost of the dynamic program. We use $x^+ = x - u$ and $ y^+ = \max\{y, u\}$ to denote the successor states for simplicity of notation. The optimal loads can be written in feedback form with a policy function $\pi^*_t: \R \times \R_{\geq 0} \to \R_{\geq 0}$ as the argument to the previous optimization formulation:
\begin{equation}
    \pi^*_t(x, y) \in \arg \min_{u \geq 0} \left\{ c_t(x, u) + \bb{E}_{q_t}\left[ V_{t+1}(x^+, y^+)\right] \right\}.
\end{equation}

The main concern of this work is on characterizing the structure of these optimal policies with respect to peak pricing. We do this through analytical characterizations in Section \ref{sec:policy} and through simulations in Section \ref{sec:simulations}.

\section{Optimal Policy Characterizations}
\label{sec:policy}

We first characterize the structure of optimal refrigeration policies with no peak pricing costs, i.e. when $P = 0$ in Eq. \eqref{eq:totalcost}, where there are only TOU costs present. In this case, our models align with the standard ones present in inventory control, and we can directly invoke classical results to get the structure of the optimal policies. Interestingly, the optimal policies simplify to a threshold strategy.

\begin{prop}[\cite{scarf1960optimality}]
\label{prop:nopeak} 
Consider the industrial refrigeration problem with a total cost in Eq. \eqref{eq:totalcost} with no peak cost ($P = 0$). The optimal policy $\pi^*_t$ is a threshold policy of the form
\begin{equation}
\label{eq:threshold}
  \pi^*_t(x, y) =
\begin{cases}
x - S_t & \text{ if } x > s_t, \\
0 & \text{ if } x \leq s_t,
\end{cases}
\end{equation}
for some $s_t \geq S_t \in \R$ for every $t \leq T$.
\end{prop}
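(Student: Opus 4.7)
The plan is to reduce the problem to the classical $(s,S)$ inventory model and invoke Scarf's K-convexity machinery. Since $P = 0$, the terminal cost $V_{T+1}$ in \eqref{eq:Vterm} has no dependence on the auxiliary state $y$; a straightforward backward induction on \eqref{eq:VT} then shows that each $V_t$ is independent of $y$, so we may restrict attention to $V_t(x)$.

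I would introduce the post-cooling variable $z = x - u$ and rewrite the Bellman recursion as
\begin{equation*}
V_t(x) = a_t x + \min_{z \leq x}\bigl\{ K \cdot \bf{1}[z < x] + G_t(z)\bigr\},
\end{equation*}
where $G_t(z) := -a_t z + \bb{E}_{q_t}[\h(z + q_t) + V_{t+1}(z + q_t)]$. The core technical step is a backward induction showing that, for every $t$, both $G_t$ and $V_t$ are K-convex, i.e., satisfy $K + f(x + a) \geq f(x) + a[f(x) - f(x-b)]/b$ for all $a \geq 0$ and $b > 0$. The base case is immediate because $V_{T+1} \equiv 0$ is convex and hence K-convex. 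For the inductive step, the argument splits into two parts: (i) convexity of $\h$ together with K-convexity of $V_{t+1}$ yields K-convexity of $G_t$, because K-convexity is preserved under addition of a convex function and under expectation over $q_t$; and (ii) the minimization operation that defines $V_t$ from $G_t$ again preserves K-convexity.

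Given K-convexity of $G_t$, one takes $S_t$ to be an unconstrained minimizer of $G_t$ (which exists by the coercivity inherited from $\h$) and $s_t \geq S_t$ to be the unique threshold satisfying $G_t(s_t) = G_t(S_t) + K$. K-convexity then ensures that for $x > s_t$ it is strictly advantageous to pay the setup cost and drive the post-cooling state to $S_t$, yielding $u = x - S_t$, while for $x \leq s_t$ the achievable reduction in $G_t$ does not justify incurring $K$ and $u = 0$ is optimal. This recovers exactly the form in \eqref{eq:threshold}.

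The main obstacle is step (ii): proving that the constrained minimization preserves K-convexity. This requires a careful case analysis on whether candidate post-cooling states lie above or below $S_t$, and is the essential content of Scarf's original argument in \cite{scarf1960optimality}, which we therefore invoke directly rather than reproduce.
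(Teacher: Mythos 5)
Your proposal matches the paper's proof in substance: the paper disposes of this proposition by citing Scarf's classical $(s,S)$ result ``with sign changes,'' and your write-up correctly fills in what that citation hides --- the $y$-independence of $V_t$ when $P=0$, the change of variable $z = x - u$, and the backward induction on K-convexity of $G_t$ and $V_t$ culminating in the $(s_t, S_t)$ structure. One small caution: since the minimization here is over $z \leq x$ (cooling lowers the state) rather than $z \geq x$ as in the ordering problem, the property you need is the mirror image of the standard K-convexity inequality you quoted, i.e.\ K-convexity of $z \mapsto G_t(-z)$; this reversal is exactly the ``sign change'' the paper alludes to and does not affect the rest of the argument.
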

\begin{proof}
We can directly use classical results in inventory control with sign changes (for e.g., see \cite{scarf1960optimality}, \cite[Chapter 2.6]{liu2012decision}, or \cite[Chapter~4]{song2023research}) to show the claim.
\end{proof}

From the above proposition, we see that the optimal structure comes in the form of a simple threshold policy, where the refrigeration system is turned off if the facility is below a certain buffer temperature $s_t$ but if turned on, sets the facility to a lower buffer temperature $S_t$. We remark that if there is no setup costs, or that $K = 0$ in Eq. \eqref{eq:totalcost}, the two buffer temperatures align, with $s_t = S_t$.

Threshold policies are commonly implemented in industrial refrigeration to schedule refrigeration loads. While we have shown that these threshold policies are optimal without peak pricing, computing closed form solutions for $s_t$ and $S_t$ is not possible in general. However, these buffer temperatures can be derived through data-driven strategies to produce well-performing threshold parameters. We do this in Section \ref{sec:simulations}.

When peak pricing is introduced to the total cost, however, we will see that the optimal policies stop corresponding to threshold policies. In fact, for simple models of refrigeration, we see that the optimal policies can become quite complicated in the next example.

\begin{ex}[Peak Pricing]
\label{ex:peak}
Consider a toy refrigeration scenario, where there is external heat input $q_t \equiv 0$ for all $t$. Let the temperature penalty function be $\h(x) = b | x |$ for all $t \leq T$. We set the parameters $P \geq b \geq a = 1$ for the peak, penalty and refrigeration costs, and fix the setup cost to $K=0$ to frame the problem naturally. Even within this basic model, the optimal policy can be surprisingly intricate.

To see this, we solve for the value function explicitly in Eq. \eqref{eq:VT} via the backwards recursion. Under the model assumptions, the value function $V_T$ at time $T$ can be simply expressed as
\begin{equation*}
    V_T(x, y) = \min_{u \geq 0} \Big\{u + b| x - u | + P \max \{y, u\} \Big\}.
\end{equation*}
From the value function, there are three regimes to describe the optimal policy. These three regimes are depicted in Figure \ref{fig:abs}, where the optimal policy is explicitly characterized below.
\begin{equation*}
    \pi^*_{T}(x, y) = \begin{cases}
        0 & \text{ if } x \leq 0, \\
        x & \text{ if } 0 \leq x \leq y, \\
        y & \text{ if }  x \geq y.
    \end{cases}
\end{equation*}

\begin{figure}[h]
    \centering
    \includegraphics[width=200pt]{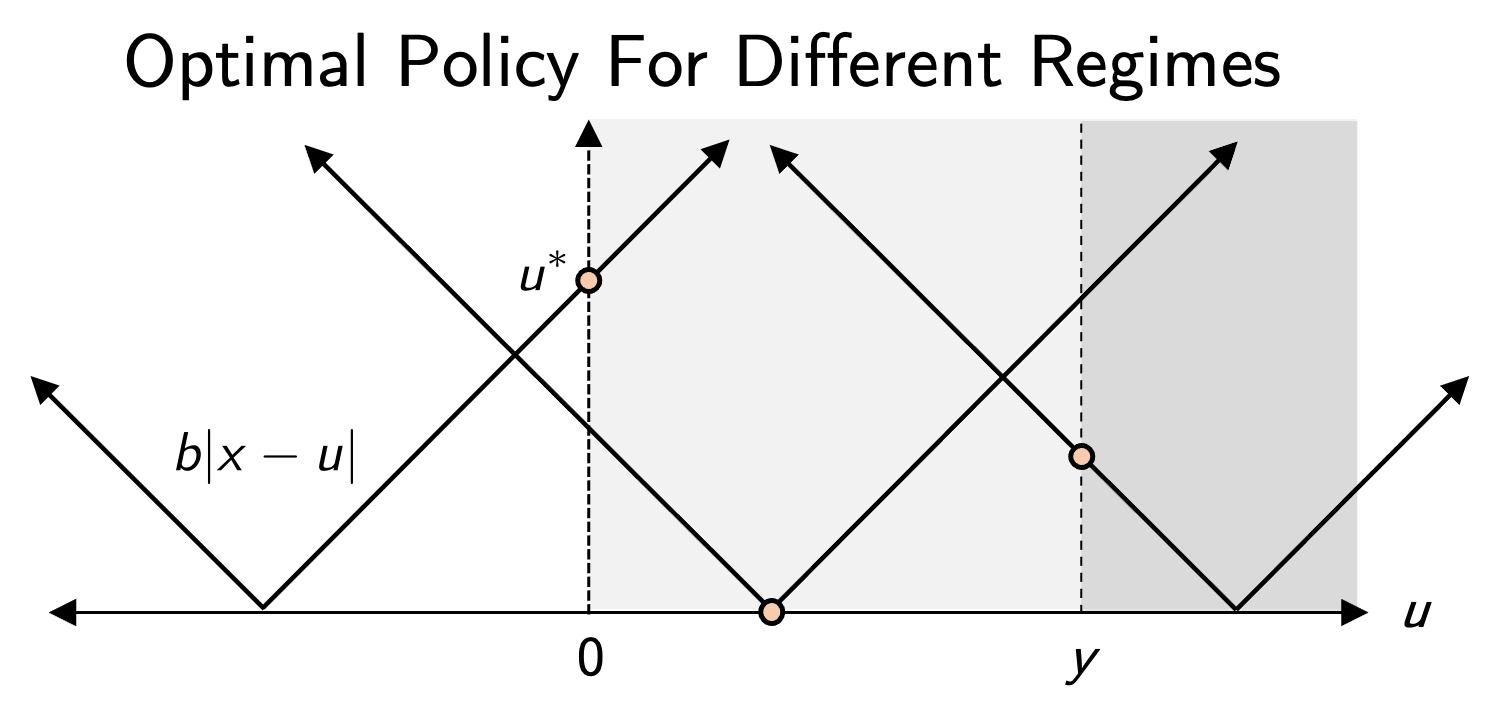}
    \caption{We depict the possible optimal inputs for each of the three regimes, dependent on $x$. Note that since $P \geq b$, the optimal $u^*$ must live between $[0, y]$, as delineated by the orange markers.}
    \label{fig:abs}
\end{figure}

The optimal policy across one time step is a threshold policy with a cap at $u = y$ due to the peak cost. However, adding another step in the backwards recursion complicates the optimal policy $\pi^*_{T-1}$. For the next step in the value recursion, we have that
\begin{equation*}
    V_{T-1}(x, y) = \min_{u \geq 0} \Big\{u + b| x - u | + V_{T}(x^+, y^+) \Big\},
\end{equation*}
where the value function $V_T$ can be described in closed form via the characterization of the optimal policy $\pi^*_T$ as
\begin{equation*}
    V_T(x, y) = \begin{cases}
        P y - b x & \text{ if } x \leq 0, \\
        P y + x & \text{ if } 0 \leq x \leq y, \\
        (P + 1 - b)y + b x & \text{ if }  x \geq y.
    \end{cases}
\end{equation*}

With the expression of the value function $V_T$, we can describe the optimal policy for a horizon of $2$. Since $V_{T}$ is piecewise-affine, the optimal inputs must occur at the boundary conditions, where $u =0$ or $u = x/2$ or $u = y$ or $u = x$. According to this, we can solve for the optimal policy algebraically to be
\begin{equation*}
    \pi^*_{T -1}(x, y) = \begin{cases}
        0 & \text{ if } x \leq 0, \\
        x & \text{ if } 0 \leq x \leq y, \\
        y & \text{ if }  y \leq x \leq 2 y \text{ or} \\
          &\text{ if }  2 y \leq x \text{ and } \\
          &(P + 2 - 3b)y \leq (\frac{P}{2} - \frac{3b}{2} + 1)x, \\
        \frac{x}{2} & \text{ otherwise}. \\
    \end{cases}
\end{equation*}

As we can see, even with a horizon of two steps, the optimal policy differs greatly from the original threshold policy in Eq. \eqref{eq:threshold}.
\end{ex}

While generating closed form expressions for optimal policies are hard to do in general (as seen in Example \ref{ex:peak}), if we limit to a horizon of $1$, we have that a modified threshold policy depicted in Figure \ref{fig:modpolicy} is optimal. We characterize the structure in the following Proposition.

\begin{figure}[ht]
    \centering
    \includegraphics[width=200pt]{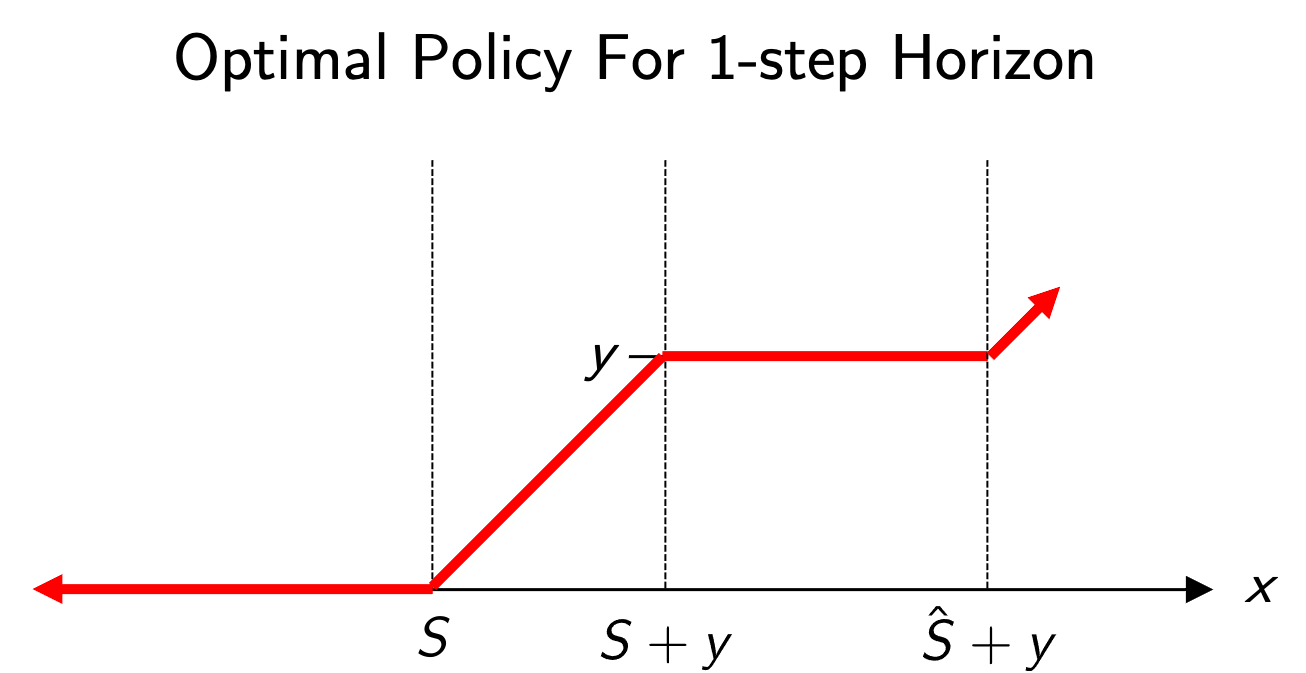}
    \caption{We depict the modified policy.}
    \label{fig:modpolicy}
\end{figure}

\begin{prop}
Consider the industrial refrigeration problem with a total cost in Eq. \eqref{eq:totalcost} with a horizon $T=1$ with no set up cost ($K = 0$ in Eq. \eqref{eq:ordering}). The optimal policy $\pi^*_1$ is characterized by
\begin{equation}
\label{eq:onestep}
    \pi^*_1(x, y) = 
    \begin{cases}
        x - \hat{S} & \text{ if } x \geq \hat{S} + y, \\
        y & \text{ if }  x \in [ S+y, \hat{S} + y], \\
        x - S & \text{ if } x \in [S, S+y], \\
        0 & \text{ if } x \leq S.
    \end{cases}
\end{equation}
for some $s \geq S \in \R_{\geq 0}$ and $ \hat{S} \geq S \in \R_{\geq 0}$.
\end{prop}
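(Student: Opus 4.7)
With $T = 1$ and $K = 0$, the Bellman equation collapses to the single convex program $\pi^*_1(x,y) \in \arg\min_{u \geq 0} \Psi(u; x, y)$, where
\begin{equation*}
\Psi(u; x, y) := a_1 u + F(x - u) + P \max\{y, u\}, \qquad F(z) := \bb{E}_{q_1}[\mathrm{h}_1(z + q_1)].
\end{equation*}
Convexity and continuity of $\mathrm{h}_1$ pass through the expectation to $F$, so each summand of $\Psi$ is convex in $u$ and the only non-smooth point is the kink at $u = y$ coming from $P \max\{y, u\}$.

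My plan is to split the feasible ray at $u = y$ and read off the minimizer via convex first-order conditions. On $[0, y]$ we have $\Psi(u) = G(u) + P y$ with $G(u) := a_1 u + F(x - u)$; on $[y, \infty)$ we have $\Psi(u) = G_P(u) := (a_1 + P)u + F(x - u)$. Both $G$ and $G_P$ are convex in $u$ and the two branches agree at $u = y$. Introduce thresholds by the inclusions $a_1 \in \partial F(S)$ and $a_1 + P \in \partial F(\hat{S})$, so that the unconstrained minimizers of $G$ and $G_P$ are $u = x - S$ and $u = x - \hat{S}$ respectively. Monotonicity of $\partial F$ together with $P \geq 0$ immediately yields $\hat{S} \geq S$.

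The four regions in~(\ref{eq:onestep}) are then recovered by sign-testing the one-sided derivatives of $\Psi$ at $u = y$ and at $u = 0$. If $x \geq \hat{S} + y$, the right-derivative of $\Psi$ at $y$ is nonpositive, so the minimum sits in $[y, \infty)$ at the interior critical point $u = x - \hat{S}$, with feasibility $u \geq 0$ automatic from $x \geq \hat{S} + y \geq \hat{S}$. If $S + y \leq x \leq \hat{S} + y$, the derivative jumps across zero at $u = y$ (by at least $P$), pinning the minimum at $u = y$. If $S \leq x \leq S + y$, the unconstrained minimizer $x - S$ of $G$ lies inside $[0, y]$ while the right-derivative of $\Psi$ at $y$ is strictly positive, so $u = x - S$ is optimal. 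Finally, if $x \leq S$, $G$ is nondecreasing on $[0, y]$ and $G_P$ on $[y, \infty)$, forcing $u = 0$. Pasting these four cases reproduces~(\ref{eq:onestep}) exactly.

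The main obstacle is purely notational rather than analytical: $\mathrm{h}_1$ need not be differentiable (as with the $b|x|$ choice of Example~\ref{ex:peak}), so $S$ and $\hat{S}$ must be interpreted as selections from $(\partial F)^{-1}$ rather than unique critical points, and a short monotonicity argument on $\partial F$ is needed both to secure the ordering $\hat{S} \geq S$ and to make the one-sided derivative sign tests at the kink $u = y$ well posed. Beyond that, the argument is a one-variable convex case split with no substantial analytical content.
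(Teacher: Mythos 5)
Your proposal is correct and follows essentially the same route as the paper: both reduce the problem to a one-variable convex program, define $S$ and $\hat{S}$ through subdifferential first-order conditions (your $a_1 \in \partial F(S)$, $a_1 + P \in \partial F(\hat{S})$ coincide with the paper's conditions on $f(z) = \bb{E}[h(z+q)] - az$, since $\partial f = \partial F - a$), and read off the four regions from the kink at $u = y$. The only difference is bookkeeping — the paper absorbs the $au$ term into $f$ while you split the objective at $u = y$ — and your case analysis is in fact more explicit than the paper's, which only spells out two of the four regions.
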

\begin{proof}
First, we express the value function $V_1$ as below
\begin{equation*}
    V_1(x, y) =
        \min_{u > 0} \Big\{f(x^+) + P y^+ \Big\} + ax,
\end{equation*}
with $f(z) \triangleq \bb{E}[h(z + q_t)] - a z$ being a convex scalar function. Let $S$ be the minimum of $f$. Note that if $x < S$, then by convexity of $f(z)$, the optimal input is $u = 0$. Additionally, if $S \leq x \leq S + y$, the optimal input is $u = x - S$. Now let $\hat{S}$ be such that the subderivative $\partial f/\partial z|_{z = \hat{S}} \ni P$. Since $f$ is convex, the minimum is defined by the first order condition on the subderivative $-\partial f/ \partial z|_{x - u} + P \ni 0$. This gives the first two conditions of the optimal policy.
\end{proof}

While we can compute the optimal one-step policy in closed form, this is not true in general, as shown in Example \ref{ex:peak}. However, as shown in the next Proposition, we can still derive a qualitative threshold-like characterization of the optimal policy in general. We depict the optimal policy pictorially in Figure \ref{fig:optgen}.

\begin{prop}
\label{prop:main}
    Consider the industrial refrigeration problem with a total cost in Eq. \eqref{eq:totalcost} with no set up cost ($K = 0$ in Eq. \eqref{eq:ordering}). The optimal policy $\pi^*_t$ is characterized by
    \begin{equation}
    \label{eq:optpolicy}
        \pi^*_t(x, y) \in 
        \begin{cases}
            [y, z^*_t] & \text{ if } g_t(y) + y \leq x, \\
            \{x - g_t(y) \} & \text{ if } x - y  \leq g_t(y) \leq x, \\
            \{0\} & \text{ if } x \leq g_t(y),
        \end{cases}
    \end{equation}
    where $g_t(y): \R_{\geq 0} \to \R$ and $z^*$ is such that $x = g_t(z^*_t) + z^*_t$.
\end{prop}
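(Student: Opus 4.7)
The plan is to analyze the Bellman recursion \eqref{eq:VT} by partitioning the admissible actions $u \geq 0$ into two regimes based on whether $u$ leaves the running peak $y$ unchanged ($u \leq y$, so $y^+ = y$) or strictly raises it ($u > y$, so $y^+ = u$). The three-branch structure of \eqref{eq:optpolicy} will emerge by stitching together the minimizers of these two regimes at their common boundary $u = y$.

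In the low regime, with $K=0$ and $u \in [0,y]$ the Bellman objective reduces to $a_t(x-z) + F_t(z, y)$, where $z \triangleq x - u$ and
$$F_t(z, y) \triangleq \bb{E}_{q_t}\left[\h(z+q_t) + V_{t+1}(z+q_t, y)\right].$$
Under the inductive hypothesis that $V_{t+1}(\cdot, y)$ is convex in its first argument for each fixed $y$, the map $z \mapsto F_t(z,y) - a_t z$ is convex, and I define $g_t(y)$ as its unconstrained scalar minimizer. Imposing the box constraint $z \in [x-y, x]$ (equivalent to $u \in [0,y]$) then yields exactly the three subcases of \eqref{eq:optpolicy} restricted to the low regime: $u = 0$ when $x \leq g_t(y)$, $u = x - g_t(y)$ when $x - y \leq g_t(y) \leq x$, and the saturated boundary $u = y$ when $g_t(y) + y \leq x$.

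In the high regime, for $u \geq y$ the Bellman objective becomes $a_t u + F_t(x - u, u)$. Parametrizing by the prospective new peak $u$ and applying a first-order/subdifferential condition together with the definition of $g_t$ characterizes an interior minimizer $u = z_t^*$ via the fixed-point relation $x - z_t^* = g_t(z_t^*)$, i.e.\ $x = g_t(z_t^*) + z_t^*$, matching the statement. Combined with the low-regime boundary $u = y$, every optimal action in the large-$x$ branch lies between these two candidates, yielding $\pi_t^*(x,y) \in [y, z_t^*]$; which endpoint is globally optimal is then resolved by directly comparing the two one-dimensional cost evaluations.

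The main obstacle will be closing the induction: ensuring that $V_t(\cdot, y)$ inherits convexity in its first argument for each fixed $y$, so that the low-regime analysis carries back through $t$. The coupling of $x$ and $y$ in the high regime, where $V_{t+1}(x-u, u)$ depends on $u$ jointly in both arguments, makes this delicate, since $V_t$ is the pointwise minimum of the low- and high-regime value functions and minimums of convex functions need not be convex. My strategy is to exploit the fact that on the range of $x$ where the high-regime minimizer is active, the low-regime value is affine in $x$ (the policy saturates at $u = y$, independent of $x$), which should enable a direct case-by-case verification that the envelope remains convex. The base case $V_{T+1}(x, y) = P y$ is trivially convex in $x$, so the recursion would close.
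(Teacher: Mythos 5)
Your two-regime split ($u\leq y$ versus $u>y$) and the fixed-point characterization of $z_t^*$ track the paper's case analysis reasonably well, but the way you plan to close the induction has a genuine gap, and the specific fix you sketch would not work. Your inductive hypothesis --- convexity of $V_{t+1}(\cdot,y)$ in its first argument for each fixed $y$ --- is too weak for your own high-regime step: there you must minimize $u \mapsto a_t u + F_t(x-u,u)$, i.e.\ evaluate $F_t$ along a diagonal line in the $(x,y)$-plane, and separate convexity in each argument does not give convexity (or even unimodality) along that line, so the subdifferential characterization of $z_t^*$ and the containment $\pi_t^*(x,y)\in[y,z_t^*]$ are not justified. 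What is needed, and what the paper proves, is \emph{joint} convexity of $V_{t+1}$ in $(x,y)$ together with monotonicity of $V_{t+1}$ in $y$ --- a property you never establish but implicitly rely on, both to compare the two regimes at the boundary $u=y$ (ruling out $u>y$ in the first two cases) and to argue that actions $u>z_t^*$, which raise the peak further, cannot help.

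Your proposed repair of the envelope --- that where the high regime is active the low-regime value is ``affine in $x$ because the policy saturates at $u=y$ independent of $x$'' --- is false: the saturated low-regime value is $a_t y + F_t(x-y,y)$, which is convex but generically not affine in $x$. More importantly, the difficulty you are trying to patch dissolves if the induction is set up correctly: since $\max\{y,u\}$ is jointly convex in $(y,u)$ and $V_{t+1}$ is nondecreasing in $y$ and jointly convex, the map $(x,y,u)\mapsto \c(x,u)+\bb{E}_{q_t}\left[V_{t+1}(x-u+q_t,\max\{y,u\})\right]$ is jointly convex, and partial minimization over the convex set $\{u\geq 0\}$ preserves joint convexity of $V_t$ in $(x,y)$. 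Your worry that ``minimums of convex functions need not be convex'' conflates the pointwise minimum of two functions with partial minimization over a decision variable; the latter is the operation here and it does preserve convexity. With joint convexity and $y$-monotonicity established by backward induction, your low-regime box analysis and high-regime fixed point go through essentially as in the paper.
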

\begin{proof}
By assumption, we have that $K = 0$ in Eq. \eqref{eq:totalcost}. First we show that for every $t$, $V_t$ is non-decreasing in $y$ via backwards induction. For the base case, $V_{T+1}(x, y) = P y$ is non-decreasing in $y$, as $P > 0$. From Eq. \eqref{eq:VT}, we have that $V_{t+1}(x^+, y^+)$ is a composition of non-decreasing functions in $y$ by the induction assumption. Since monotonicity is preserved under expectations and infimum projections, the iterate $V_t$ is thus non-decreasing in $y$ as well, and we have the claim.

Now we show that $V_t$ is convex for every $t$. Similarly as before, we show this via backwards induction. For the base case, $V_{T+1}(x, y) = P y$ is affine in $y$ and thus convex. For the inductive case, we have that $\c$ is convex in $x$ and $u$. Additionally, $V_{t+1}$ is convex by the induction assumption. Moreover, since $x^+$ is affine in $x$ and $u$; $y^+$ is convex in $y$ and $u$; and $V_{t+1}$ is non-decreasing in $y$, $V_{t+1}(x^+, y^+)$ is a convex function of $x$, $y$, and $u$. Since convexity is preserved under expectation, $V_{t}(x, y)$ can be concisely expressed as
\begin{equation}
\label{eq:proofVt}
    V_t(x, y) = \min_{u \geq 0} f_t(x^+, y^+) + ax
\end{equation}
where $f_t(x, y) = \bb{E}_{q_t}[\h(x + q_t) + V_{t+1}(x + q_t, y)] - ax$ is a convex function. Since $\{u \in \R: u \geq 0\}$ is convex, convexity is preserved under the minimization, and thus $V_t(x, y)$ is a convex function. Thus the claim is shown.

Now we describe the optimal policy $\pi_t(x, y)$. Note that we can describe the optimal policy as
\begin{equation*}
    \pi_t(x, y) = \arg \min_{u \geq 0} \big\{ f_t(x^+, y^+) \big\} + a x.
\end{equation*}
where $f_t(x, y)$ is defined as in Eq. \eqref{eq:proofVt}. For each $y$, define the function $g_t(y)$ to be 
\begin{equation}
    g_t(y) = \max \Big\{ \arg \min_x f_t(x, y) \Big\}.
\end{equation}

We describe the three cases for the optimal policy as shown in Figure \ref{fig:optgen} and Eq. \eqref{eq:optpolicy}. If $x \leq g_t(y)$, then observe that $f_t(x^+, y^+) \geq f_t(x^+, y)$, since $V_{t+1}$ is non-decreasing in $y$. Additionally, as $f_t$ is convex, and $x \leq g_t(y)$, we have that $f_t(x^+, y) \geq f_t(x, y)$ for all $u \geq 0$, and thus $\pi^*_t(x, y) = 0$ when $x \leq g_t(y)$. Likewise, if $x - y  \leq g_t(y) \leq x$ we have that $f_t(x^+, y^+) \geq f_t(x^+, y) \geq f_t(g_t(y), y)$ for all $u \geq 0$, and thus $\pi_t(x, y) = x - g_t(y)$.

For the third condition $g_t(y) + y \leq x$, we first note that for $u = y$ generates a cost of $f_t(x-y, y) \leq f_t(x-u, y)$ for any $u \leq y$. Thus the optimal policy must satisfy $\pi^*(x, y) \geq y$. Moreover, let $u = z^*$ be the input in which $(x^+, y^+)$ intersects the curve $x = g_t(y)$. Note that $f(x^+, y^+) \geq f(x^+, z^*) \geq f(g_t(z^*), z^*)$ for any $u \geq z^*$, since $f$ is non-decreasing in $y$ and $f$ is convex. Thus the optimal policy must satisfy $\pi^*(x, y) \leq z^*$.
\end{proof}

\begin{figure}[ht]
    \centering
    \includegraphics[width=200pt]{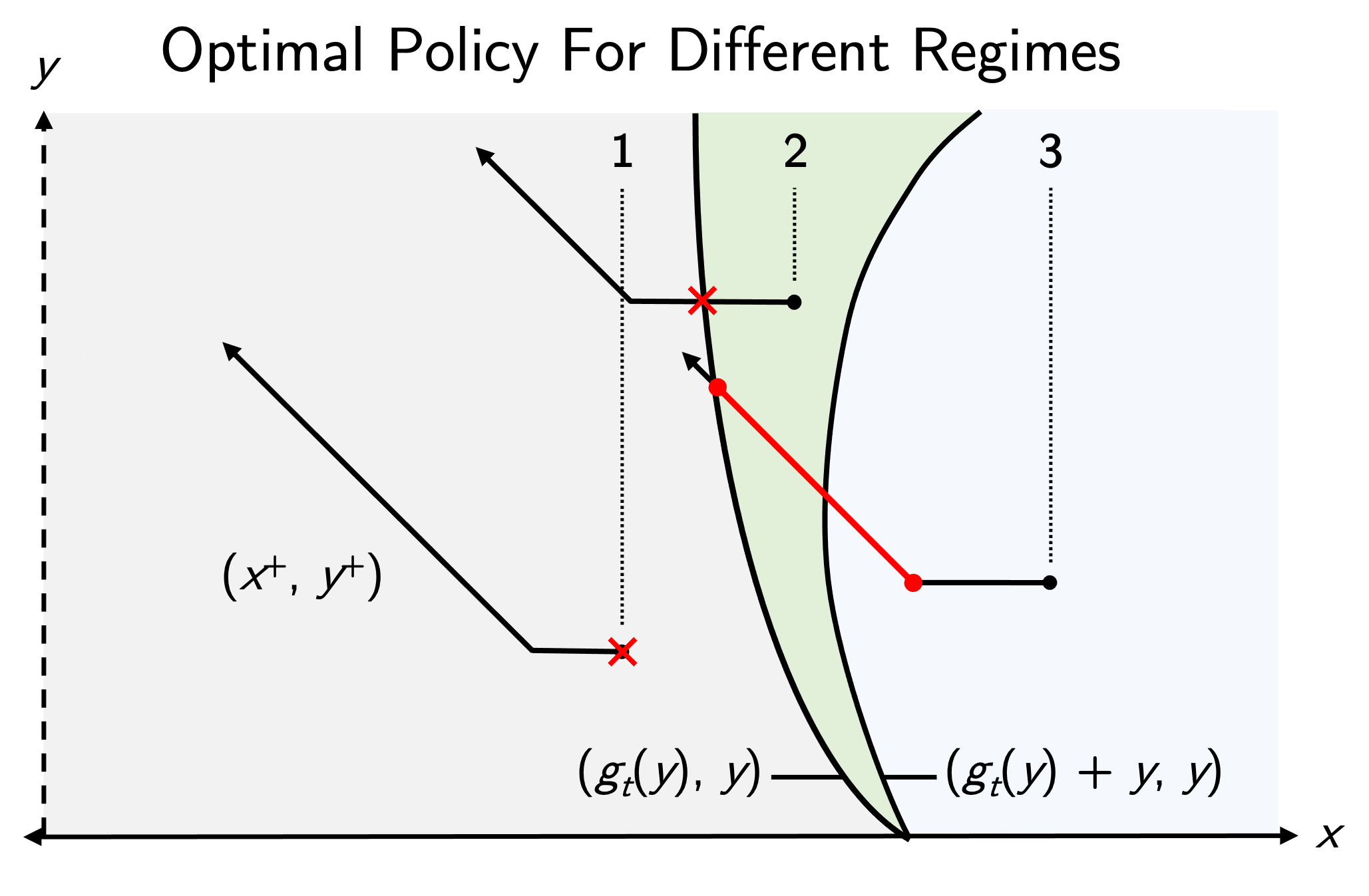}
    \caption{We depict the possible optimal inputs for each of the three regimes corresponding to Eq. \eqref{eq:optpolicy}, given the curve $g_t(y)$.}
    \label{fig:optgen}
\end{figure} 

\begin{rem*}
From Proposition \ref{prop:main}, we note that if $g_t(y) + y \leq x$, the optimal control action $u$ may lie anywhere in the interval $[y, z^*_t]$. Where on this interval depends non-trivially on the relative peak cost $P$ and the specific holding cost $\h$ and the current time $t$. However, we maintain a threshold-like policy structure, where below a buffer temperature $g_t(y)$ (dependent on the current peak), the optimal decision is to turn off the refrigeration system. 
\end{rem*}

\begin{ex}
The optimal policy $\pi^*(x, y)$ in Eq. \eqref{eq:optpolicy} may incur non-intuitive dependencies on the current peak value $y$; we outline a simplified scenario that outlines this phenomena. Consider a horizon of $T=2$, where the starting temperature is $x_1 = -2$ and the incoming heat is $q_t = 2$ for all $t$. Let there be no per-unit cost ($a = 0$) and setup cost ($K = 0$). Furthermore, we set the peak costs to $P \gg 1$ and the temperature penalty function to be 
\begin{equation*}
    \h(x) = \begin{cases}
        b x & \text{ if } x \geq 0 \\
        - x & \text{ if } x < 0,
    \end{cases}
\end{equation*}
where $b \gg 1$. We compare two scenarios: where the current peak is $y = 1$ and the current peak is $y = 2$. In the first case ($y = 1$), observe that the optimal control sequence is $u_1^* = 1$ and $u_2^* = 1$ to not incur any temperature violation costs. However, when $y = 2$, the optimal control sequence is $u_1^* = 0$ and $u_2^* = 2$. As such, the optimal policy $\pi_1(x, y)$ actually increases with the current peak $y$, counter to intuition that increasing that the peak should introduce more conservatism to the optimal policy. 
\end{ex}

\section{Simulation Study on Policy Designs}
\label{sec:simulations}

In this section, we evaluate the performance of different policy designs through a case scenario based on the thermal data (also represented in Figure \ref{fig:Qt}) acquired from a refrigeration facility owned by Butterball LLC\textregistered. We depict the possible incoming heat distributions during the on and off peak hours in Figure \ref{fig:onoff}.

\begin{figure}[ht]
    \centering
    \includegraphics[width=250pt]{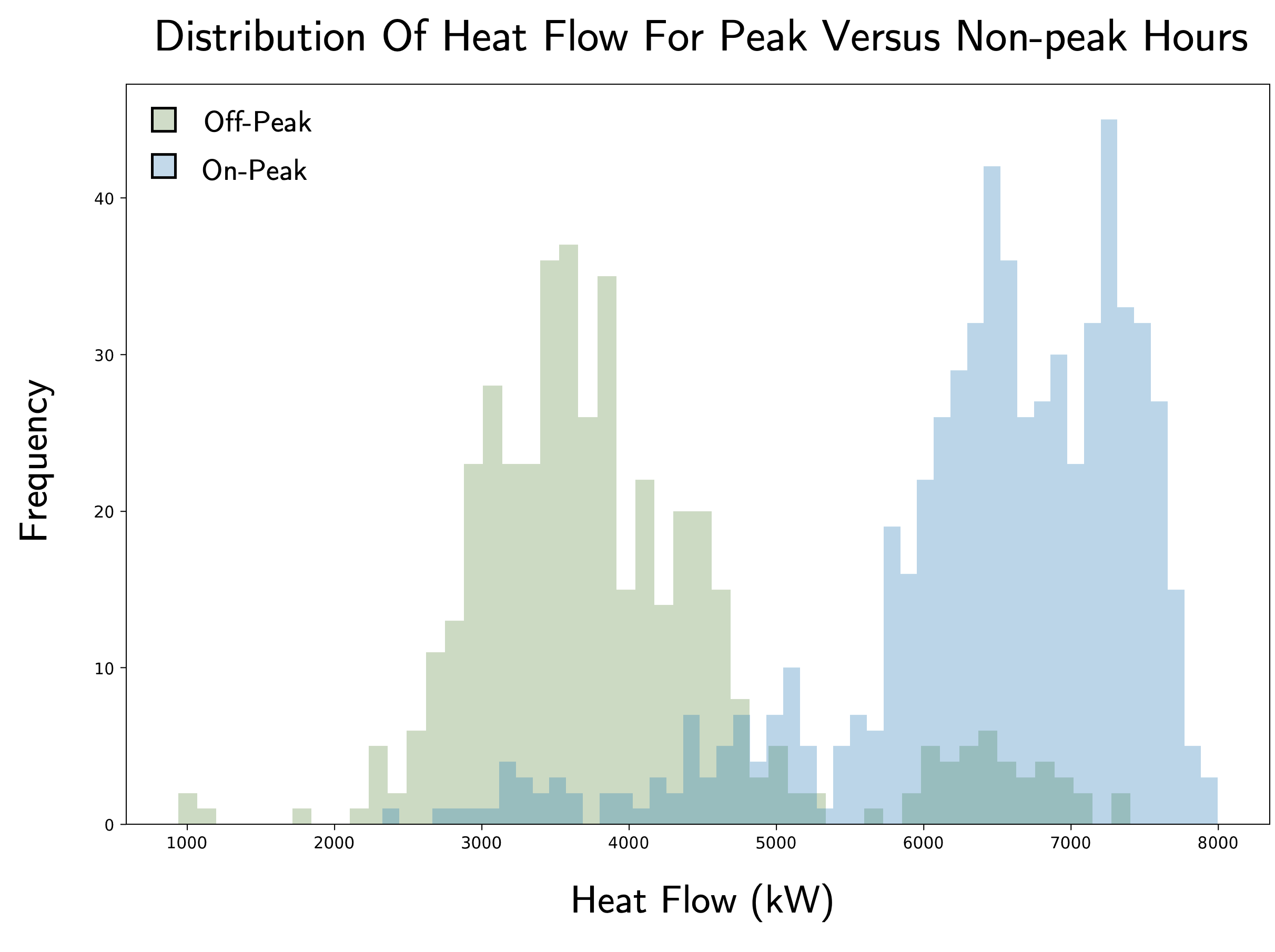}
    \caption{Incoming heat distributions of on and off peak time periods.}
    \label{fig:onoff}
\end{figure} 
 
The specifics used in the case study are outlined as follows. We evaluate the total cost as according to Eq. \eqref{eq:totalcost}, where each time step $t$ approximately corresponds to a $12$-hour time window and the trajectory cost is evaluated over horizon of a month. If the time $t$ corresponds to the on-peak, then the incoming heat distribution $\bf{Q}_t$ is exactly the one depicted in Figure \ref{fig:onoff} (and respectively for the off-peak time periods). We additionally normalize the incoming heat values to Megawatts to prevent numerical issues. We assume that the starting temperature and starting peak value are both $0$. We use the temperature violation cost of the form in Eq. \eqref{eq:viol} with $b_t = 20$ and $d_t = 1$. We additionally set the respective coefficients for peak costs to $P = 30$, setup costs to $K = 0$, and per-unit costs to $a = 1$ to reasonably model a possible refrigeration scenario.

\begin{table}[ht!]
\centering
\begin{tabular}{ | c | c |  }
\hline 
 Methodology & Total Cost  \\
 \hline 
 \hline
 Static Threshold Policy &  36.1 \\  
 Dynamic Threshold Policy & 26.7 \\ 
 Modified Threshold Policy & 30.7 \\ 
 Dynamic Modified Threshold Policy & 25.5 \\
 \hline 
\end{tabular}
\caption{Cost of Algorithms}
\label{table:main}
\end{table}

Under this example setup, we evaluate the performance of different types of threshold policies according to the total cost in Eq. \eqref{eq:totalcost}. We first examine the performance of the threshold policy in Eq. \eqref{eq:threshold}, where $s_t = S_t$ is not time varying. Additionally, we examine the dynamic version, where $s_t$ is allowed to depend on if $t$ corresponds to an on-peak or off-peak period. We also examine the performance of the policies stated in Proposition \ref{prop:nopeak} for time-dependent and time-independent values as well. For these policies, we optimize for the coefficients through a Monte Carlo search. The performance of these algorithms are depicted in Table \ref{table:main}.

\section{Conclusion}
\label{sec:conc}
Improving the control algorithms for industrial refrigeration systems can have significant impacts on energy usage, given their widespread presence and substantial scale. This work has thus focused on control algorithm design, specifically investigating the implications of various energy cost-rate structures, such as time-of-use and peak pricing. To do this, we leverage tools from inventory control to characterize the structure of the optimal policies. While simple threshold policies are optimal under solely time-of-use cost structures, adding significant peak costs can markedly change the policy structure of the optimal control algorithms. We provide characterizations of the one-step optimal policies in Proposition \ref{prop:nopeak} and generalize to arbitrary horizons in Proposition \ref{prop:main}, highlighting  significant differences from the classical threshold policies. Through theoretical analysis and simulations, we underscore the necessity for mindfulness in constructing control policies in the face of significant peak costs - thus providing valuable insights for more effective refrigeration management.

\bibliographystyle{ieeetr}
\bibliography{references.bib}
\end{document}